\theoremstyle{plain}
\newtheorem{theorem}{Theorem}[section]
\newtheorem{lemma}[theorem]{Lemma}
\newtheorem{corollary}[theorem]{Corollary}
\newtheorem{fact}[theorem]{Fact}
\newtheorem{conjecture}[theorem]{Conjecture}
\newtheorem{definition}[theorem]{Definition}
\theoremstyle{remark}
\newtheorem{remark}[theorem]{Remark}
\newcommand{\old}[1]{}
\renewcommand{\R}{\ensuremath{\mathbb R}}
\newcommand{\Z}{\ensuremath{\mathbb Z}}
\renewcommand{\P}[1]{{\mathbb{P}}\left[#1\right]}
\renewcommand{\PP}[2]{{\mathbb{P}}_{#1}\left[#2\right]}
\renewcommand{\E}[1]{{\mathbb{E}}\left[#1\right]}
\renewcommand{\EE}[2]{{\mathbb{E}}_{#1}\left[#2\right]}
\newcommand{\Var}[1]{{\operatorname{Var}}\left[#1\right]}
\newcommand{\VV}[2]{{\operatorname{Var}}_{#1}\left[#2\right]}
\def\b1{{\bf 1}}
\def\1{{\bf 1}}
\def\eps{{\epsilon}}
\def\R{\mathbb{R}}
\definecolor{mygreen}{RGB}{85,170,102}
\definecolor{myblue}{RGB}{57,121,255}
\definecolor{mypurple}{RGB}{121,57,216}
\definecolor{mygray}{RGB}{180,180,180}
\newcommand{\declareperson}[1]{\expandafter\newcommand\csname#1\endcsname[1]{\textcolor{orange}{#1: ##1}}}
\begin{document}
\title{An Improved Approximation Algorithm for the Minimum $k$-Edge Connected Multi-Subgraph Problem}
\author{Anna R. Karlin\thanks{\href{mailto:karlin@cs.washington.edu}{karlin@cs.washington.edu}. Research supported by Air Force Office of Scientific Research grant FA9550-20-1-0212 and NSF grant CCF-1813135.}}
\author{Nathan Klein\thanks{\href{mailto:nwklein@cs.washington.edu}{nwklein@cs.washington.edu}. Research supported in part by NSF grants DGE-1762114, CCF-1813135, and CCF-1552097.}}
\author{Shayan Oveis Gharan\thanks{\href{mailto:shayan@cs.washington.edu}{shayan@cs.washington.edu}. Research supported by Air Force Office of Scientific Research grant FA9550-20-1-0212, NSF grants  CCF-1552097, CCF-1907845,  
and a Sloan fellowship.}} 
\author{Xinzhi Zhang\thanks{\href{mailto:xinzhi20@cs.washington.edu}{xinzhi20@cs.washington.edu}. {Research supported by  NSF grant CCF-1813135.}}} 
\affil{University of Washington}
\maketitle

\begin{abstract}
We give a randomized $1+\frac{5.06}{\sqrt{k}}$-approximation algorithm for the minimum $k$-edge connected spanning multi-subgraph problem, $k$-ECSM.
\end{abstract}
\thispagestyle{empty} 

\newpage 
\setcounter{page}{1}

\section{Introduction}
In an instance of the minimum $k$-edge connected spanning subgraph problem, or $k$-ECSS, we are given an (undirected) graph $G=(V,E)$ with $n:=|V|$ vertices and a cost function $c:E\to\R_{\geq 0}$, and we want to choose a minimum cost set of edges $F\subseteq E$ such that the subgraph $(V,F)$ is $k$-edge connected. In its most general form, $k$-ECSS generalizes several extensively-studied problems in network design such as tree augmentation or cactus augmentation, for which there has been recent exciting progress (e.g. \cite{FGKS18,CTZ21,TZ21,BGA20}).
The $k$-edge-connected  {\em multi}-subgraph problem, $k$-ECSM, is a close variant of $k$-ECSS in which we want to choose a $k$-edge-connected {\em multi}-subgraph of $G$ of minimum cost, i.e., we can choose an edge $e\in E$ multiple times. Note that without loss of generality we can assume the cost function $c$ in $k$-ECSM is a metric, i.e., for any three vertices $x,y,z\in V$, we have $c(x,z)\leq c(x,y)+c(y,z)$.

Around four decades ago, Fredrickson and J\'aj\'a \cite{FJ81,FJ82} designed a 2-approximation algorithm for $k$-ECSS and a 3/2-approximation algorithm for $k$-ECSM. The latter essentially follows by a reduction to the well-known Christofides-Serdyukov approximation algorithm for the traveling salesperson problem (TSP).
Over the last four decades, despite a number of papers on the problem \cite{JT00,KR96, Kar99,Gab05,GG08,  GGTW09,Pri11,LOS12}, the aforementioned approximation factors were only improved in the cases where the underlying graph is unweighted or $k\gg \log n$. Most notably, Gabow, Goemans, Tardos and Williamson \cite{GGTW09} showed that if the graph $G$ is unweighted then $k$-ECSS and $k$-ECSM admit  $1+2/k$ approximation algorithms, i.e., as $k\to\infty$ the approximation factor approaches 1. 
The case of $k$-ECSM where $k=2$ has received significant attention and (significantly) better than $3/2$-approximation algorithms were designed for special cases \cite{CR98,BFS16,SV14,BCCGISW20}. In the general $k=2$ case, only a $3/2-\epsilon$ approximation is known where $\epsilon = 10^{-36}$ \cite{KKO21b}; we remark this also extends to all even $k$.

 Motivated by \cite{GGTW09},  Pritchard posed the following conjecture:
 \begin{conjecture}[\cite{Pri11}]\label{conj:kECSM}
The $k$-ECSM problem admits a $1+O(1)/k$ approximation algorithm.
 \end{conjecture}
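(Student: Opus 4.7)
The plan is to target the conjecture by analyzing a randomized rounding of the natural cut-covering LP for $k$-ECSM:
\[
\min \sum_e c_e x_e \quad \text{s.t.} \quad x(\delta(S)) \ge k \text{ for all } \emptyset \neq S \subsetneq V,\qquad x_e \ge 0.
\]
Let $x^*$ be an optimal fractional solution of value $\OPT_{LP}$. Since $x^*/k$ lies in the dominant of the spanning tree polytope, it decomposes as a convex combination of spanning trees, and on $k$-fold scaling decomposes as a distribution over integer multi-subgraphs of expected cost $\OPT_{LP}$ whose expected crossing of every cut equals $x^*(\delta(S)) \ge k$. The goal is to produce an integral feasible $F$ of cost $(1+O(1)/k)\OPT_{LP}$.

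My approach is two-stage. First, sample an integer multigraph $F_0$ from a max-entropy (strongly Rayleigh) distribution with marginals $x^*/k$ on each of $k$ independent spanning-tree draws, so that $\mathbb{E}[c(F_0)] = \OPT_{LP}$. Second, add a cheap ``repair'' multiset $R$ chosen as an optimal LP solution to the residual $k$-ECSM augmentation problem on the current deficiencies $\{k - |F_0 \cap \delta(S)|\}_+$. The central technical claim I would attempt to prove is that $\mathbb{E}[\text{LP cost of }R] = O(\OPT_{LP}/k)$, not merely the cut-wise statement $\mathbb{E}[(k - |F_0 \cap \delta(S)|)_+] = O(x^*(\delta(S))/k)$ for each fixed $S$. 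To pass from the per-cut statement to a \emph{simultaneous} statement controlling the augmentation LP value, I would invoke (i) the Dinits--Karzanov--Lomonosov cactus representation, which shows that the family of tight and near-tight cuts has only polynomial complexity and a tree-like intersection structure, and (ii) strong negative correlation / stochastic covering inequalities for Rayleigh measures, which let one convert per-cut deficiency bounds into bounds on a sum-of-cuts augmentation LP without a union-bound penalty. Conditioning on the high-probability event that $c(R) \le O(\OPT_{LP}/k)$ and adding an integral rounding of $R$ (again at $O(1/k)$ multiplicative cost using Jain's iterative rounding restricted to the cactus-supported constraint family) yields the target bound.

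The main obstacle is exactly the step of bootstrapping per-cut concentration to a simultaneous augmentation bound while avoiding any $\log(\cdot)$ factor; this is the barrier that forces previous work (including the $1+\sqrt{8\ln k/k}$ bound obtained in this paper) to settle for a $\sqrt{\log k / k}$ loss. Overcoming it seems to require a genuinely new ingredient: either a rounding distribution whose marginal law on tight-cut indicators is integral by design (so those cuts incur no deficiency whatsoever), which would plausibly come from a matroid-intersection or contraction-based sampler rather than from i.i.d.\ tree sampling; or an amortized charging argument that pays for each deficient cut out of the slack of a non-tight cut higher in the cactus hierarchy. The hardest sub-step is almost certainly analyzing this charging: guaranteeing that slack in non-tight cuts is \emph{geometrically allocatable} to deficient tight cuts in a way that totals to $O(\OPT_{LP}/k)$ deterministically. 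I expect the structural properties of the cactus of $k$-min-cuts, combined with the Gabow--Goemans--Tardos--Williamson unweighted analysis as a template, to be the starting point, but closing the gap to $1+O(1)/k$ on weighted instances remains an open problem for which no existing toolkit suffices.
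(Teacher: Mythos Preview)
The statement you are attempting to prove is \cref{conj:kECSM}, which is a \emph{conjecture}. The paper does not prove it; it explicitly states that \cref{thm:main} establishes only a weaker bound of $1+\sqrt{8\ln k/k}$, and the conjecture remains open. There is therefore no ``paper's own proof'' to compare against.

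Your proposal is not a proof either, and you say so yourself: the central technical claim you identify---that the expected augmentation LP cost is $O(\OPT_{LP}/k)$ rather than merely that each individual cut has expected deficiency $O(x^*(\delta(S))/k)$---is exactly the barrier, and you conclude that ``closing the gap to $1+O(1)/k$ on weighted instances remains an open problem for which no existing toolkit suffices.'' That is an accurate assessment, but it means the document is a research plan, not a proof. The cactus representation and negative-correlation properties you invoke are real tools, but neither currently yields the simultaneous, log-free control you need over the augmentation LP; the charging-to-slack idea is a heuristic with no supporting lemma. As a minor technical note, $x^*/k$ is not in the spanning tree polytope (the paper uses $2x^*/k$ after a vertex-splitting trick, \cref{fact:opt_in_sp_polytope}), and the paper samples $k/2$ trees rather than $k$; but these details are moot given that the main step is missing.
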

 In other words, if true, the above conjecture implies that the 3/2-classical factor can be substantially improved for large $k$, and moreover that it is possible to design an approximation algorithm whose factor gets arbitrarily close to 1 as $k\to\infty$. In this paper, we prove a weaker version of the above conjecture.
 \begin{restatable}[Main]{theorem}{mainthm}\label{thm:main}
 There is a polynomial time randomized algorithm for (weighted) $k$-ECSM	 with approximation factor (at most)  $1+\frac{5.06}{\sqrt{k}}$.
 \end{restatable}
We remark that our main theorem only improves the classical 3/2-approximation algorithm for $k$-ECSM when $k > 103$.  However, the constants are not optimized and we expect our algorithm to beat $3/2$ for much smaller values of $k$. 



For a set $S\subseteq V$, let $\delta(S)=\{\{u,v\}: |\{u,v\}\cap S|=1\}$ denote the set of edges with one endpoint in $S$. The following is the natural linear programming relaxation for  $k$-ECSM.
\begin{equation}
\begin{aligned}
	\min \hspace{3ex}& \sum_{e\in E}x_e c(e) & \\
    \text{s.t.} \hspace{3ex}& x(\delta(v))=k &\forall v\in V\\
    &x(\delta(S)) \ge k & \forall S\subseteq V, S \neq \emptyset \\
    & x_e \ge 0 & \forall e\in E.
\end{aligned}
\label{eq:k_ecsmlp}
\end{equation}
Note that while in an optimum solution of $k$-ECSM the degree of each vertex is not necessarily equal to $k$, since the cost function satisfies the triangle inequality we may assume that in any optimum fractional solution each vertex has (fractional) degree $k$. This follows from the parsimonious property \cite{gb93}. 


We prove \cref{thm:main} by rounding an optimum solution to the above linear program. So, as a corollary we also upper-bound the integrality gap of the above linear program.
\begin{corollary}
	The integrality gap of LP \eqref{eq:k_ecsmlp} is at most $1+\frac{5.06}{\sqrt{k}}$.
\end{corollary}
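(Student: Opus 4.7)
The plan is to obtain the corollary essentially for free from (the proof of) \cref{thm:main}. The key observation is that the algorithm promised by \cref{thm:main} does not compare its output against the integer optimum directly; rather, it takes as input an optimum fractional solution $x^*$ to the LP \eqref{eq:k_ecsmlp} and produces (in expectation) a feasible $k$-edge-connected multi-subgraph whose cost is at most $\bigl(1+\sqrt{8\ln k/k}\bigr)\cdot c(x^*)$. This stronger guarantee is what is needed to bound the integrality gap.

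Concretely, I would argue as follows. Fix any instance $(G,c)$ of $k$-ECSM and let $x^*$ denote an optimum solution of \eqref{eq:k_ecsmlp}, with LP value $\text{OPT}_{LP} := c(x^*) = \sum_e x^*_e c(e)$. By the parsimonious argument already recalled in the excerpt, we may assume $x^*(\delta(v))=k$ for all $v\in V$, so $x^*$ is a valid input to the rounding procedure of \cref{thm:main}. Let $F$ denote the (random) multi-set of edges returned. The proof of \cref{thm:main} shows $(V,F)$ is $k$-edge-connected with probability $1$ and $\E{c(F)}\le \bigl(1+\sqrt{8\ln k/k}\bigr)\cdot \text{OPT}_{LP}$.

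Since the expectation bound implies the existence of a realization $F_0$ with $c(F_0)\le \E{c(F)}$, and since $(V,F_0)$ is a feasible integral $k$-ECSM solution, we conclude that the integer optimum $\text{OPT}_{IP}$ satisfies
\[
\text{OPT}_{IP}\;\le\; c(F_0)\;\le\;\left(1+\sqrt{\tfrac{8\ln k}{k}}\right)\text{OPT}_{LP}.
\]
Taking the ratio $\text{OPT}_{IP}/\text{OPT}_{LP}$ and maximizing over instances bounds the integrality gap by $1+\sqrt{8\ln k/k}$, as claimed.

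I do not expect any real obstacle here: the only thing to be careful about is making sure that the rounding in \cref{thm:main} is indeed benchmarked against the LP value and not against some stronger (e.g.\ integral) quantity, and that the input assumption $x^*(\delta(v))=k$ is genuinely without loss of generality. Both of these points are already addressed in the paragraph following \eqref{eq:k_ecsmlp}, so the corollary reduces to one line of bookkeeping once \cref{thm:main} is in hand.
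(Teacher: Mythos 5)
Your proof is correct and matches the paper's reasoning exactly: the paper states the corollary follows immediately because \cref{thm:main} is proved by rounding an LP optimum $x$ and benchmarking the rounded solution's expected cost against $c(x)$, so the existence of a low-cost feasible integral solution (hence the integrality gap bound) follows by the same first-moment argument you give. Your added care about the degree constraints being WLOG via parsimonious is a nice sanity check but is already handled in the paper's setup.
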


\subsection{Proof Overview}\label{sec:overview}
Before explaining our algorithm, we recall a randomized rounding approach of Karger \cite{Kar99}. 
Karger showed that if given a solution $x$ to \eqref{eq:k_ecsmlp} we choose every edge $e$  independently with probability $x_e$, then the sample is $k-O(\sqrt{k\log n})$-edge connected with probability close to 1. 
He then fixes the connectivity of the sample by adding $O(\sqrt{k\log n})$ copies of the minimum spanning tree of $G$. This gives a randomized $1+O(\sqrt{\log n/k})$ approximation algorithm for the problem. While this is a very effective procedure for large $k$, it is not useful when $k$ is a constant or grows slower than $\log n$. We view our result as a refinement of this method using random spanning trees which allows $k$ to be independent of $n$.

First, we observe that when $x$ is a solution to \eqref{eq:k_ecsmlp}, the vector $2x/k$ is in the spanning tree polytope (after modifying $x$ slightly, see \cref{fact:opt_in_sp_polytope} for more details). Following a recent line of works on the traveling salesperson problem \cite{OSS11,KKO21} we write $2x/k$ as a so-called {\em max-entropy distribution} $\mu$ over spanning trees. 

\paragraph{Warm-up algorithm and key idea.} Our first algorithm, explained in \cref{sec:warmup}, independently samples $k/2$ spanning trees $T_1,\dots,T_{k/2}$ from $\mu$. 
Call the (multi-set) union of these trees $T^*$. Since max entropy distributions are negatively correlated, it is easy to show using Chernoff bounds that any particular cut $S$ has at least $k-O(\sqrt{k \ln k})$ edges with probability at least $1-O(1/\sqrt{k})$\footnote{Of course, one can make this probability much closer to 1 (say $1-O(1/k^2)$) by only paying a constant factor in the $O(\sqrt{k \ln k})$ term, but it is sufficient to make it $1-O(1/\sqrt{k})$.}. So, in the second step of the algorithm, we add $O(\sqrt{k \ln k})$ additional spanning trees to fix the connectivity of every cut ``with high probability." In other words, after this procedure (which has expected cost $1+O(\sqrt{\ln k/k})$ times the cost of the LP), \textit{every cut $S$ has at least $k$ edges with probability $1-O(1/\sqrt{k})$}. One can think of this as a version of Karger's algorithm which does not fix \textit{every cut} with high probability but instead fixes \textit{each individual cut} with high probability. 

A priori, this does not seem like a useful property, because there are exponentially many cuts to bound over. However, we show that (perhaps somewhat surprisingly) there is a way to fix the connectivity of every cut simultaneously by only paying an additional factor of $O(1/k)$ times the cost of the LP in expectation. 

To do so, we begin with the following simple observation: Fix a cut $S$. Then, if we ensure that every tree $T_i$ has at least 2 edges in $\delta(S)$, the union of the trees $T_i$ will have at least $k$ edges across the cut and we are done. So, if a cut $S$ turns out to have fewer than $k-O(\sqrt{k \ln k})$ edges in $T^*$, one can think of ``blaming" the trees which had only one edge in $\delta(S)$; in particular, we will fix the cut by doubling the sole edge in $\delta(S)$ for each of those trees. This guarantees that every tree has at least two edges across the cut and therefore it has at least $k$ edges total as desired. This is essentially the key idea of this paper. 

Formally, after sampling $T_1,\dots,T_{k/2}$, iterate over every edge $e$ of every tree $T_i$ and consider the unique cut $S$ in which $e$ appears as the only edge of $T_i$. This is the \textit{only} cut for which $e$ may be ``blamed" and hence doubled. Now, we check if $T^*$ has fewer than $k-O(\sqrt{k \ln k})$ edges across $\delta(S)$. Over the randomness of the remaining trees, this occurs with probability $O(1/\sqrt{k})$ (by Chernoff bounds, as argued above). This shows that every edge of $T^*$ is doubled with probability $O(1/\sqrt{k})$, and therefore since $T^*$ has expected cost at most $c(x) \le OPT$, the approximation ratio of the algorithm is $1+O(\sqrt{\ln k/k}) + O(1/\sqrt{k})$, i.e. $1+O(\sqrt{\ln k/k})$, as desired. 

\paragraph{Main algorithm.} The above algorithm is suboptimal in a fairly obvious way. Suppose that a cut $S$ is missing (for example) $\sqrt{k \ln^2 k}$ edges. Then, its connectivity is not fixed by the additional $O(\sqrt{k \ln k})$ spanning trees added by the algorithm. So, the warm-up algorithm simply adds an additional copy of every edge which appeared alone in this cut in its tree. However, it may be that the cut $\delta(S)$ has only one edge in as many as $\Omega(k)$ trees! Therefore, we will add $\Omega(k)$ extra edges to fix the cut instead of just the required $\sqrt{k \ln^2 k}$ edges: a huge overcorrection. \cref{alg:main} simply avoids this overcorrection by only adding the number of edges actually missing from the cut, sampling them independently from the set of edges which appeared alone on this cut. It turns out this will let us add only $O(\sqrt{k})$ additional trees instead of $O(\sqrt{k \ln k})$, avoiding the extra $\sqrt{\ln k}$ factor\footnote{For a slightly tighter analysis we also include these additional trees in $T^*$, but this is mostly a superficial difference.}.  

However, this adds some difficulty to the analysis. For example, now that we only add $O(\sqrt{k})$ extra trees, it is not true that a cut only has to be fixed with probability $O(1/\sqrt{k})$. In fact, this probability may even be $O(1)$. To sharpen the analysis, for any fixed set $S$ we study $p_S$, the probability that a random (max entropy) tree has exactly one edge in $\delta(S)$. In particular, we show that the expected number of edges that $\delta(S)$ is missing (below $k$) is at most $O(\sqrt{k}e^{-1/p_S})$. 

This bound on the expectation is then enough to complete the argument as follows. Let $n(S)$ be the number of trees $T_i$ for which $|T_i \cap \delta(S)| = 1$. Let $e$ be the unique edge in $\delta(S)$ for some $T_i$. 
Then, the probability $e$ needs to double is the expected number of edges missing from $\delta(S)$ divided by $n(S)$. Using the analysis above and that $\E{n(S)} = \Omega(kp_S)$, this can be shown to be $O(\frac{\sqrt{k}e^{-1/p_S}}{k p_S}) = O(1/\sqrt{k})$ in expectation\footnote{This is not immediate since this is the ratio of the expectations, but we actually need to analyze the expectation of the ratio.}.

\begin{remark}
We note that we expect this algorithm to work for any distribution of spanning trees which is negatively correlated. So, one could for example apply swap rounding \cite{CVZ10} to generate random spanning trees (instead of using the max entropy distribution). However, while the analysis giving $1+O(\sqrt{\frac{\ln k}{k}})$ approximation in \cref{sec:warmup} can easily be modified to give similar bounds for any negatively correlated distribution (since Chernoff bounds can be applied), the proof of \cref{thm:main} in \cref{sec:alg-analysis} currently relies on the fact that the distribution of the number of edges in any cut can be written as a sum of independent Bernoullis. So, an extension of \cref{thm:main} to an arbitrary negatively correlated distribution would require a different analysis technique or a generalization of \cref{lem:variance-bound}. 

We also briefly remark that some form of concentration is necessary. In particular, consider a distribution  over spanning trees in which a vertex $v$ has degree 1 with probability $1-1/(n-2)$ and degree $n-1$ with probability $1/(n-2)$. In such a case, we expect to need to add  $k/2$ edges from $\delta(v)$ to ensure $v$ has degree at least $k$. If these edges have all the cost of the LP (or there are many vertices with this property), we can get an approximation ratio as bad as 3/2 even for large $k$. 
\end{remark}

\section{Preliminaries}

\begin{definition}[$G^0,u_0,v_0$]\label{def:G0}
    We expand the graph $G=(V,E)$ to a graph $G^0$ by picking an arbitrary
vertex  $u\in V$, splitting it into two nodes $u_0$ and $v_0$, and then, for every edge $e= (u,w)$ incident to $u$, assigning fraction $\frac{x(e)}{2}$ to each of the two edges $(u_0,w)$ and $(v_0, w)$ in $G^0$ We set $x((u^0, v^0)) = 0$.  Call this expanded graph $G^0$, its edge set $E^0$, and the resulting fractional solution $x^0$, where $x^0(e)$ and $x(e)$ are identical on all other edges.  (Note that each of $u_0$ and $v_0$ now have fractional degree $k/2$ in $x^0$.) In \cref{fact:opt_in_sp_polytope} below, we show  that $\frac{2}{k} \cdot x^0$ is in the spanning tree polytope for the graph $G^0$.  For ease of exposition, the algorithm is described as running on  $G^0$ (and spanning trees\footnote{A spanning tree in $G^0$ is a 1-tree in $G$, that is, a tree plus an edge.} of $G^0$), which has the same edge set as $G$ (when $u_0$ and $v_0$ are identified).
\end{definition}

\subsection{Basic Notation}
For a subset of vertices $S \subseteq V$, we write $E(S) \subseteq E$ to denote the set of edges in the induced graph of $G$ whose vertex set is $S$.

For two sets of edges $F, F' \subseteq E$, we write $F \uplus F'$ to denote the multi-set union of $F$ and $F'$ allowing multiple edges. Note that we always have $|F\uplus F'|=|F|+|F'|$. 

For set of vertices $S\subseteq V$, let $\delta(S)= \{\{u,v\}: |\{u,v\}\cap S|=1\}$ denote the set of edges with one endpoint in $S$ and one endpoint in $\overline S$.

For any two sets of edges $F, T\subseteq E$, we write
\begin{align*}
    F_T := |F \cap T|.
\end{align*}
We will primarily use this notation to denote the number of edges $F$ has in a spanning tree (or union of spanning trees) $T$. 

Also, for any edge weight function $x: E \to \R$, we write $x(F) := \sum_{e\in F} x(e)$. 

\begin{definition}[$S_T(e)$, the ``One-Cut" of $e$ in $T$]\label{def:one-cut}
	For any spanning tree $T$ on the vertex set $V_0$, and any edge $e \in T$, let $S_T(e) \subseteq V_0 \smallsetminus \{u_0\}$ be the unique connected component of $T\smallsetminus \{e\}$ which does not contain $u_0$. We will call this the \textbf{one-cut} of $e$ in $T$.
\end{definition}
Particular edges $e\in T$ of interest are those where both $u_0,v_0\notin S_T(e)$.


Recall that the natural linear programming relaxation for $k$-ECSM is \eqref{eq:k_ecsmlp}.
The solution to this LP can be computed in polynomial time using the ellipsoid method.

For a real-valued random variable $X$, we write $X^+ = \max(0, X)$ to denote the positive part of $X$.

\subsection{Random Spanning Trees}\label{sec:prelim_st}
Edmonds \cite{Edm70} gave the following description for the convex hull of the spanning trees of any graph $G=(V,E)$, known as the {\em spanning tree polytope}.
\begin{equation}
\begin{aligned}
& z(E) = |V|-1 & \\
& z(E(S)) \leq |S|-1 &  \forall S\subseteq V\\
& z_e \geq 0 & \hspace{6ex} \forall e\in E.
\end{aligned}
\label{eq:spanningtreelp}
\end{equation}
Edmonds also \cite{Edm70} proved that the extreme point solutions of this polytope are the characteristic vectors of the spanning trees of $G$.

\begin{fact}[\cite{KKO21}]\label{fact:opt_in_sp_polytope}
    Let $x$ be the optimal solution of LP \eqref{eq:k_ecsmlp} and $x^0$ its extension to $G^0$ as described in \cref{def:G0}.
    Then $\frac{2}{k}\cdot x^0$ is in the spanning tree polytope  \eqref{eq:spanningtreelp} of $G^0$.
\end{fact}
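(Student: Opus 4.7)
The plan is to verify all three defining inequalities of the spanning tree polytope \eqref{eq:spanningtreelp} for $y := \frac{2}{k}\,x^0$ on the graph $G^0$, whose vertex set $V^0 := (V\setminus\{u\})\cup\{u_0,v_0\}$ has $n+1$ elements. Non-negativity is immediate. The equality $y(E^0) = |V^0|-1 = n$ follows by summing fractional degrees: every vertex in $V\setminus\{u\}$ retains its original $x$-degree $k$, while $u_0$ and $v_0$ each have $x^0$-degree $k/2$ by construction, so $2\,x^0(E^0) = k(n-1) + k = kn$.

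For the subtour constraints, I would fix $S \subseteq V^0$, and use the handshake identity
\begin{equation*}
    2\,x^0(E^0(S)) + x^0(\delta_{G^0}(S)) \;=\; \sum_{v\in S} x^0(\delta_{G^0}(v)).
\end{equation*}
Let $\hat S \subseteq V$ be the image of $S$ in $G$ after re-identifying $u_0,v_0$ with $u$. The argument then splits by $|S\cap\{u_0,v_0\}|$. When $S$ contains neither or both of $u_0,v_0$, the two half-weight copies of each $u$-incident edge pair up across the cut, giving exactly $x^0(\delta_{G^0}(S)) = x(\delta_G(\hat S)) \geq k$ by the LP; the degree sum is $k|S|$ or $k(|S|-1)$ respectively, and in both subcases the identity immediately yields $x^0(E^0(S)) \le k(|S|-1)/2$, as desired. (The boundary case $\hat S = V$, i.e.\ $S = V^0$, reduces to the equality already shown.)

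The one case that needs actual work---and which I expect to be the main obstacle---is when exactly one of $u_0,v_0$, say $u_0$, belongs to $S$. The degree sum is now $k(|S|-1) + k/2$, so the target reduces to showing $x^0(\delta_{G^0}(S)) \geq k/2$. Here the two half-copies of each $u$-edge no longer pair up symmetrically, so I would carry out a direct bookkeeping, separating the weights on edges incident to $u$ into those that cross $\delta_G(\hat S)$ (contributing $\alpha := \sum_{w\notin\hat S} x(u,w)$) and those that stay inside $\hat S$ (contributing $k-\alpha$, since $x(\delta_G(u))=k$). Tracking which half-copy lands on which side of $S$ should give
\begin{equation*}
    x^0(\delta_{G^0}(S)) \;=\; x(\delta_G(\hat S)) \;+\; \tfrac{k}{2} \;-\; \alpha.
\end{equation*}
Because $\alpha \le x(\delta_G(u)) = k \le x(\delta_G(\hat S))$ (the last inequality being the LP cut constraint on $\hat S$), the right-hand side is at least $k/2$, closing the case and completing the verification.
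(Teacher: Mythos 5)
Your proof is correct and follows essentially the same route as the paper: verify the subtour constraints via the handshake identity, case-splitting on $|S \cap \{u_0,v_0\}|$, with the mixed case being the only one needing care. One small simplification for that case: you can bound $\alpha \le x(\delta_G(\hat S))$ directly (the $u$-edges leaving $\hat S$ are a subset of $\delta_G(\hat S)$), which avoids invoking the LP cut constraint and handles the boundary case $\hat S = V$ without a separate remark.
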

\begin{proof}
For any set $S\subseteq V(G^0)$, we have $x^0(E(S)) = \sum_{v\in S}x^0(\delta(v)) - x^0(E(S))$.
If $u_0, v_0 \notin S$, then $x^0(\delta(v)) = k$ for all $v\in S$, and $x^0(\delta(S)) = x(\delta(S))$.
        Therefore, $$x^0(E(S)) =  \frac{k|S| - x(\delta(S))}{2} \leq \frac{k}{2}(|S| -1).$$
        
If $u_0 \in S, v_0\notin S$, then $\sum_{v\in V} x^0(\delta(v)) = k(|S|-1) + k/2 = k|S| - k/2$ and $x^0(\delta(S))\ge k/2$, so
  $$x^0(E(S)) \le  \frac{k|S| - k/2 - x^0(\delta(S))}{2} \leq \frac{k}{2}(|S| - 1).$$
Similar analysis also holds for the case where $u_0 \notin S, v_i \in S$.

Finally, if $u_0, v_0 \in S$, then $\sum_{v\in V} x^0(\delta(v)) = k(|S|-2) + 2 \cdot k/2 = k|S| - k$ and $x^0(\delta(S)) \geq k$. 
    Thus, $$x^0(E(S)) = \frac{k|S| - k - x^0(\delta(S))}{2} \leq \frac{k}{2}(|S| - 2).$$
The claim follows because 
        $x^0(E) =\frac{k|V(G)|}{2} = \frac{k}{2}(|V(G^0)| - 1).$ 
\end{proof}

Given nonnegative edge weights $\lambda: E \to \R_{\geq 0}$,
we say a distribution $\mu_\lambda$ over spanning trees of $G$ is \emph{$\lambda$-uniform}, if for any spanning tree $T$,
$$\PP{T \sim \mu_{\lambda}}{T} \propto \prod_{e\in T} \lambda(e).$$
It has been shown that the maximum entropy distribution over spanning trees is a $\lambda$-uniform distribution.

\begin{theorem}[\cite{AGMOS17}]\label{thm:max-entropy}
    There is a polynomial-time algorithm that, given a connected graph $G = (V,E)$, and a point $z\in \R^{|E|}$ in the spanning tree polytope \eqref{eq:spanningtreelp} of $G = (V,E)$, returns 
    $\lambda: E \to \R_{\geq 0}$ such that the corresponding $\lambda$-uniform spanning tree distribution $\mu_{\lambda}$ satisfies
    \begin{align*}
        \sum_{T\in \mathcal{T}: e\in T}\mu_{\lambda}(T) \leq (1+2^{-n}) z_e,~ \forall e\in E,
    \end{align*}
    i.e., the marginals are approximately preserved. In the above $\mathcal{T}$ is the set of all spanning trees of $G$.
\end{theorem}

\subsection{Bernoulli-Sum Random Variables}

In this section, we introduce several properties of the Bernoulli-sum random variable.
\begin{definition}[Bernoulli-Sum Random Variable]\label{def:bernoulli-rv}
    We say $BS(q)$ is a \emph{Bernoulli-Sum} random variable if it has the law of a sum of independent Bernoullis, say $B_1 + B_2 + \cdots + B_{t}$ for some $t \geq 1$, with $\E{B_1 + \cdots + B_t} = q$.
\end{definition}

\cite{BBL09, Pit97} showed that the size of any set of edges on any $\lambda$-uniform random spanning tree, i.e. for a set $F \subseteq E$, $F_T$ is distributed as a Bernoulli-sum random variable.

\begin{lemma}[Random variables $F_T$ are Bernoulli Sums \cite{BBL09, Pit97}]\label{lem:sum_bernoulli}
    Given $G = (V, E)$ and $\lambda : E\to \R_{\geq 0}$, let $\mu_\lambda$ be the $\lambda$-uniform spanning tree distribution of $G$. Let $T$ be a sample from $\mu_{\lambda}$. Then for any fixed $F \subseteq E$, the random variable $F_T$ is distributed as $BS(\E{F_T})$.
\end{lemma}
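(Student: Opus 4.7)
The plan is to derive this from the theory of strongly Rayleigh (SR) measures. The first step is to observe that $\mu_\lambda$ is a strongly Rayleigh probability distribution on $\{0,1\}^E$: its generating polynomial
\[
g(\mathbf{z}) \;=\; \sum_{T\in\mathcal{T}} \Bigl(\prod_{e\in T}\lambda(e)\Bigr)\prod_{e\in T} z_e
\]
is, up to normalization, the determinant of a principal minor of the weighted graph Laplacian (Kirchhoff's matrix-tree theorem), which is real stable. Real stability of $g$ is exactly the definition of $\mu_\lambda$ being SR.

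Next, I would restrict to the edges of $F$ by using the standard closure properties of SR measures. Specifically, setting $z_e = 1$ for every $e \notin F$ in $g$ (which corresponds to taking the marginal law of $(\mathbf{1}[e\in T])_{e\in F}$) preserves real stability, so the projection of $\mu_\lambda$ onto the coordinates in $F$ is again an SR measure on $\{0,1\}^F$. Then setting $z_e = x$ for every $e \in F$ in the resulting polynomial preserves real stability and produces a univariate polynomial in $x$; for univariate polynomials real stability is the same as having only real roots. This univariate polynomial is, by construction, the probability generating function $P(x)=\E{x^{F_T}}$.

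The final step is a classical lemma (used in exactly this form in \cite{Pit97, BBL09} and going back to Darroch): if an integer-valued random variable $X$ supported on $\{0,1,\dots,|F|\}$ has a probability generating function $P(x) = \E{x^X}$ that is a polynomial with only real roots, then $X$ has the law of a sum of independent Bernoullis. Indeed, since $P$ has nonnegative coefficients all roots are nonpositive, so we may factor $P(x) = \prod_{i=1}^{|F|} (p_i x + (1-p_i))$ with $p_i \in [0,1]$ (after normalizing $P(1) = 1$), which is the PGF of $\sum_i B_i$ with independent $B_i \sim \text{Bernoulli}(p_i)$. Linearity of expectation gives $\sum_i p_i = P'(1) = \E{F_T}$, so $F_T \sim BS(\E{F_T})$.

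The main obstacle, conceptually, is the first step: justifying that $\mu_\lambda$ is strongly Rayleigh. Once that is in hand, the remaining steps are a short chain of standard closure properties of real stable polynomials together with the Darroch-type characterization of real-rooted PGFs as Bernoulli sums.
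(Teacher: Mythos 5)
The paper does not give a proof of this lemma; it is stated as a citation to \cite{BBL09, Pit97}. Your argument is a correct reconstruction of exactly the chain of reasoning in those references: real stability of the spanning-tree generating polynomial via the weighted matrix-tree theorem (a determinant of a nonnegative linear combination of PSD rank-one matrices $L_e$), closure of real stability under setting variables to $1$ (projection/marginalization) and under diagonalization $z_e \mapsto x$, and finally the Darroch/L\'evy observation that a real-rooted probability generating function with nonnegative coefficients factors as $\prod_i\bigl(p_i x + (1-p_i)\bigr)$ with $p_i\in[0,1]$, hence is the PGF of a sum of independent Bernoullis, with $\sum_i p_i = P'(1) = \E{F_T}$. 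The only cosmetic point worth noting is that $\deg P$ may be strictly less than $|F|$ (if some $e\in F$ lies in every spanning tree, or more generally), in which case one pads with trivial Bernoullis $p_i=0$; this does not affect the conclusion, and indeed \cref{def:bernoulli_rv} allows any number $t\ge 1$ of summands.
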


We start with a fact that comes directly from linearity of expectation and the definition of variance:
\begin{fact}\label{fact:sumBS}
    If $X=BS(q_1)$ and $Y=BS(q_2)$ are two independent Bernoulli-sum random variables, then $\E{X+Y} = q_1 + q_2$ and $\Var{X+Y} = \Var{X} + \Var{Y}$.	
\end{fact}

\begin{theorem}[Optimize Expected Value of BS Random Variable, \cite{Hoe56} Corollary 2.1]\label{thm:bs-optimize}
    Let $g:\{0,1,\cdots,m\}\to \R$ and $0\le p\le m$ for some integer $m > 0$. Let $X_1,\cdots,X_m$ be $m$ independent Bernoulli random variables with success probabilities $p_1,\cdots,p_m$ that minimizes (or maximizes)
    \begin{align*}
        \E{g(X_1 + \cdots + X_m)}
    \end{align*}
    such that $X_1 + \cdots + X_m = BS(p)$. Then $p_1,\cdots,p_m \in \{0,x,1\}$ for some $0<x<1$.
\end{theorem}

\begin{corollary}\label{cor:1/e}
For any $BS(q)$ with $q \ge 1$, $\P{BS(q) = 0} \le 1/e$.
\end{corollary}
\begin{proof}
    Suppose $BS(q) = X_1 + \cdots + X_m $ for some $m \in \Z_+$ where $X_1, \cdots, X_m$ are independent Bernoullis with success probability $p_1, \cdots, p_m$. 
    Let $g(x) = \mathbb{I}[x = 0]$.
    Then from \cref{thm:bs-optimize}, if we want to maximize $\P{BS(q) = 0} = \E{g(X_1 + \cdots + X_m)}$, then $p_1, \cdots, p_m \in \{0, x, 1\}$ for some $0 < x < 1$.
    Suppose $m_1$ of $p_i$'s are $1$, $m_2$ of $p_i$'s are $0$, and the rest of the $(m - m_1 - m_2)$ $p_i$'s are $\frac{q - m_1}{m - m_1 - m_2}$.
    Then we have 
    \begin{align*}
    	\Pr[BS(q) = 0] &= \prod_{i=1}^m (1 - p_i) \leq 0^{m_1} \cdot 1^{m_2} \cdot (1 - \frac{q - m_1}{m - m_1 - m_2})^{m - m_1 - m_2} \leq (1 - \frac{q}{m})^m \leq e^{-q}.
    \end{align*}
    where maximization is reached  when $m_1 = m_2 = 0$ and $m \to +\infty$. Notice that $q \geq 1$, we have $\Pr[BS(q) = 0] \leq 1/e$ as desired.
\end{proof}

\begin{fact}\label{fact:small-bern}
    Given any $0 \leq \epsilon < 1$,
    let $p_1 \ge p_2 \ge \dots p_m$ be the success probabilities of $m \geq 2$ independent Bernoullis such that $\sum_{i=1}^m p_i = 1 + \epsilon$. Suppose $p_1 \le \frac{1}{2}(1 + \epsilon)$. Then $\prod_{i=1}^m (1-p_i) \ge \frac{1}{4}(1 - \epsilon)^2$.
\end{fact}
\begin{proof}
The first step is to see that $\prod_{i=1}^m (1-p_i)$ is minimized when $p_1$ is as large as possible, i.e., $p_1=\frac12(1+\eps)$.
To see that, say $p_m>0$ (for some $m>1$) and observe that for any $0<\delta\leq p_m$,
$$ (1-(p_1+\delta)) (1-p_2)\dots(1-(p_m-\delta)) \leq \prod_{i=1}^m (1-p_i).$$
Note that this operation does not change the order of $p_i$'s. 
So, without loss of generality, assume $p_1=\frac12(1+\eps)$. Now, by Weierstrass inequality we have
\begin{align*}
	 \prod_{i=1}^m (1-p_i) \geq (1-p_1) \left(1-\sum_{i=2}^m p_i\right) &= (1-\frac12(1+\eps))(1-\frac12(1-\eps)) \geq \frac14(1-\eps)^2
\end{align*}

where the second to last identity uses that $\sum_i p_i = 1+\eps$. 
\end{proof}



\begin{theorem}[Bernstein Inequality for BS Random Variables]\label{thm:bernstein}
Let $X = BS(q)$ be a BS random variable with $\E{X}=q$ and $\Var{X} = \sigma^2$. Then $\forall \lambda > 0$ we have
$$\P{ X \leq q - \lambda} \le \exp \left( -\frac{\lambda^2}{2(\sigma^2 + \lambda / 3)} \right).$$
\end{theorem}

\begin{theorem}[Multiplicative Chernoff-Hoeffding Bound for BS Random Variables]\label{thm:chernoff}
Let $X=BS(q)$ be a Bernoulli-Sum random variable. Then, for any $0<\eps<1$ and $q' \le q$,
$$ \P{X<(1-\eps)q'}\leq e^{-\frac{\eps^2q'}{2}},$$
and for any $\epsilon > 0$, $q' \ge q$,
$$ \P{X>(1+\eps)q'}\leq e^{-\frac{\eps^2q'}{2+\epsilon}}.$$
\end{theorem}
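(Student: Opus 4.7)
The plan is to reduce the statement to the standard multiplicative Chernoff bound for sums of independent Bernoullis, which asserts that if $Y = BS(\mu)$ and $0 < \delta < 1$, then $\P{Y < (1-\delta)\mu} \le \exp(-\delta^2 \mu/2)$. This fact follows from the moment generating function estimate $\E{e^{tY}} = \prod_i (1 + p_i(e^t-1)) \le \exp(\mu(e^t-1))$ combined with Markov's inequality for $t<0$ (optimized at $e^t = 1-\delta$), so I would simply invoke it.

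The only real step is converting the hypothesized deviation threshold $(1-\eps) q'$ into a deviation from the true mean $q$. Since $q' \le q$, I would define
\[
\delta \;:=\; 1 - (1-\eps)\,q'/q,
\]
which, assuming $q'>0$ (the case $q'=0$ being trivial), satisfies $\delta \in [\eps, 1)$ and $(1-\delta)q = (1-\eps) q'$ by construction. The standard Chernoff bound then yields
\[
\P{X < (1-\eps)q'} \;=\; \P{X < (1-\delta)q} \;\le\; \exp\!\left(-\tfrac{\delta^2 q}{2}\right),
\]
so it suffices to verify the elementary inequality $\delta^2 q \ge \eps^2 q'$.

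Writing $r := q'/q \in (0,1]$, this reduces to $(1-(1-\eps)r)^2 \ge \eps^2 r$. Defining $f(r) := (1-(1-\eps)r)^2 - \eps^2 r$, I would check directly that $f(1) = 0$ and
\[
f'(r) \;=\; -2(1-\eps)\bigl(1 - (1-\eps) r\bigr) - \eps^2 \;\le\; -2(1-\eps)\eps - \eps^2 \;=\; -\eps(2-\eps) \;<\; 0
\]
on $[0,1]$ (using $1-(1-\eps)r \ge \eps$ for $r \le 1$). Hence $f$ is strictly decreasing with $f(1)=0$, so $f(r) \ge 0$ on $(0,1]$, completing the proof. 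There is no genuine obstacle here: the entire content is simply repackaging the standard Chernoff bound to express the decay rate in terms of $q'$ rather than the true mean $q$.
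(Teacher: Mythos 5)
The paper does not prove \cref{thm:chernoff}; it is stated as a standard consequence of the Chernoff--Hoeffding machinery and used as a black box, so there is no proof of record to compare against. Your derivation is correct: the reduction to the usual lower-tail bound via $\delta := 1 - (1-\eps)q'/q$ is sound (one does need $\delta \in (0,1)$, which you verify), and the remaining claim $\delta^2 q \ge \eps^2 q'$ is correctly established by your monotonicity argument for $f(r) = (1-(1-\eps)r)^2 - \eps^2 r$ on $(0,1]$.

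Two small remarks. First, the inequality $\delta^2 q \ge \eps^2 q'$ can be checked without calculus: with $r = q'/q$,
\[
(1-(1-\eps)r)^2 - \eps^2 r \;=\; (1-r)\bigl[(1-r) + \eps r(2-\eps)\bigr] \;\ge\; 0,
\]
since both factors are nonnegative for $r\in[0,1]$ and $\eps\in(0,1)$. Second, a slightly more direct route avoids the change of variable entirely: from $\E{e^{-tX}} \le e^{-q(1-e^{-t})}$ and $q\ge q'$ (so $e^{-q(1-e^{-t})} \le e^{-q'(1-e^{-t})}$ for $t>0$), Markov's inequality gives $\P{X<(1-\eps)q'} \le e^{-q'\bigl(t(1-\eps)+1-e^{-t}\bigr)}$, and optimizing at $e^{-t}=1-\eps$ yields $e^{-q'\bigl(\eps+(1-\eps)\ln(1-\eps)\bigr)} \le e^{-\eps^2 q'/2}$. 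This bakes the monotonicity in $q$ into the MGF bound rather than requiring a separate algebraic verification, but your approach is equally valid.
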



\section{Warm-up: a Simple Algorithm with a $1 + O\big(\sqrt{\frac{\ln k}{k}}\big)$-Approximation Ratio.} \label{sec:warmup}

We first explain a simple algorithm (\cref{alg:warmup}) that has a slightly weaker $1 + O\big(\sqrt{\frac{\ln k}{k}}\big)$-approximation ratio.  We defer our main algorithm (\cref{alg:main}) and the proof of our main result (\cref{thm:main}) to \cref{sec:alg-analysis}.



\begin{algorithm}[]\caption{An Approximation Algorithm for $k$-ECSM}\label{alg:warmup}
\begin{algorithmic}[1]
    \State Let $x^0$ be an optimum solution of \eqref{eq:k_ecsmlp} extended to the graph $G^0$ as described above. \label{line:opt_ecss}
    \State Find  weights $\lambda : E^0 \to \R_{\ge 0}$ such that for any $e\in E^0$, $ \PP{\mu_{\lambda}}{e} \le  \frac{2}{k}x_e^0 (1 + 2^{-n})$.
    \Comment{By  \cref{thm:max-entropy}}
    \State Sample $k/2$ spanning trees $T_1,\cdots,T_{k/2}\sim\mu_\lambda$ (in $G^0$) independently and let $T^* \leftarrow T_1 \uplus \dots \uplus T_{k/2}$. 
    \label{line:sample_st}
    \State Let $B$ be the disjoint union of an additional $\alpha\sqrt{k/2-1}$\footnote{While the quantity $\sqrt{k/2-1}$ can of course be arbitrary, we choose it to make the analysis as simple as possible.} spanning trees sampled from $\mu_\lambda$.  
    \Comment{$\alpha=\Theta(\sqrt{\ln k})$ is a parameter we choose later.} 
    \label{line:baseline}
    \For {$i \in [\frac{k}{2}]$ and $e\in T_i$}
    		\If{$\delta(S_{T_i}(e))_{T^*} < k - \alpha\sqrt{k/2-1}$ and $(u_0,v_0)\notin \delta(S_{T_i}(e))$}
    			\State $F \leftarrow F \uplus \{e\}$. 
    			\label{line:augment}
    		\EndIf
    \EndFor
    \State {\bf Return} $T^* \uplus B \uplus F$.
\end{algorithmic}
\end{algorithm}

In the first step of \cref{alg:warmup}, we solve \eqref{eq:k_ecsmlp} on the (slightly) extended graph $G^0$. Let $x^0$ to be the optimal solution. By \cref{fact:opt_in_sp_polytope}, $(2/k)x^0$ is in the spanning tree polytope. Then in line 2, we find the $\lambda$-uniform spanning tree distribution $\mu_{\lambda}$ where each edge has marginal probability $(2/k)x^0_e$ (ignoring the $2^{-n}$ relative errors). This step is guaranteed to be done in polynomial time by \cref{thm:max-entropy}.

In line 3, we independently sample $k/2$ spanning trees\footnote{If $k$ is odd, we sample $\lceil{k/2}\rceil$ trees.  The bound remains unchanged relative to the analysis we give below as the potential cost of one extra tree is $O(OPT/k)$.} 
$T_1,\dots,T_{k/2}$ from $\mu_{\lambda}$, and let $T^* = T_1 \uplus \cdots \uplus T_{k/2}$ to be the (multi-set) union of the samples.
It follows that $T^*$ satisfies many desirable properties of the $\lambda$-uniform spanning tree distribution: 
\begin{enumerate}[i)]
	\item $T^*$ has the same expectation as the LP solution $x^0$, since the marginal probability of each edge is exactly $x^0(e)$;
	\item For any cut $\delta(S)$ in $G$, since $\delta(S)_{T^*}$ is distributed as a Bernoulli-sum random variable, Chernoff-type inequalities apply and $\delta(S)_{T^*}$ is highly concentrated around its mean;
	\item Since $T^*$ is the union of $k/2$ trees, for all cuts we have  $\delta(S)_{T^*} \geq k/2$. Moreover, if a cut $\delta(S)$ is not a tree cut of any of the $k/2$ trees, then each of the $k/2$ trees must have at least 2 edges crossing it. Therefore, the number of ``bad'' cuts of $T^*$, i.e. those with $\delta(S)_{T^*}< k $, is at most $(n-1)k/2$ (with probability 1).
\end{enumerate}

To fix the potentially $O(nk)$ bad cuts, we divide them into two types: (i) Cuts $S$ such that $\delta(S)_{T^*} \geq k-\alpha \sqrt{k/2-1}$ and (ii) Cuts $S$ where $\delta(S)_{T^*} < k-\alpha \sqrt{k/2-1}$, for some $\alpha=\Theta(\sqrt{\ln k})$.
We fix all cuts of type (i) by adding $B = \alpha \sqrt{k/2-1}$ additional spanning trees as in line 4 of the algorithm (note one could alternatively add $\alpha \sqrt{k/2-1}$ copies of the minimum spanning tree as in Karger's algorithm). 
To fix cuts $S$ of type (ii), we employ the following procedure: for any tree $T_i$ where $\delta(S)_{T_i}=1$ and $S$ is of type (ii), we add one extra copy of the unique edge of $T_i$ in $\delta(S)$. This procedure is in line 5 to line 9 of the algorithm. Let $F$ be the set of edges added in this step; then 
the output of our algorithm is $T^* \uplus B \uplus F$ as in line 10.

Now we analyze \cref{alg:warmup}. 
\begin{theorem}[Approximation Ratio for \Cref{alg:warmup}]\label{thm:main-warmup}
\Cref{alg:warmup} outputs a (weighted) $k$-ECSM with approximation factor (at most)  $1+\sqrt{\frac{8\ln k}{k}}$.
\end{theorem}

We begin by showing that the output of \cref{alg:warmup} is $k$-edge connected (in $G$) with probability 1.

\begin{lemma}[$k$-Connectivity of the Output] \label{lem:connectivity}
    For any $\alpha \ge 0$, the output of \cref{alg:warmup}, $F\uplus B\uplus T^*$ is a $k$-edge connected subgraph of $G$. 
\end{lemma}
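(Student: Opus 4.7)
The plan is to verify $k$-edge connectivity by analyzing an arbitrary cut of $G$. First I would recall that cuts of $G$ are in bijection with cuts $\delta(S)$ of $G^0$ having $u_0$ and $v_0$ on the same side (so that $(u_0,v_0)\notin \delta(S)$), and that the edge multisets crossing such cuts in $G$ and in $G^0$ coincide. So it suffices to fix such a cut $\delta(S)$ in $G^0$ and show $\delta(S)_{T^*\uplus B\uplus F}\ge k$.

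I would then split on the size of $\delta(S)_{T^*}$. In the easier case $\delta(S)_{T^*}\ge k-\alpha\sqrt{k/2-1}$: since $B$ consists of $\alpha\sqrt{k/2-1}$ copies of a spanning tree of $G^0$ and every spanning tree crosses $\delta(S)$ at least once, we get $\delta(S)_B\ge \alpha\sqrt{k/2-1}$ and hence $\delta(S)_{T^*\uplus B}\ge k$ immediately.

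The substantive case is $\delta(S)_{T^*}< k-\alpha\sqrt{k/2-1}$, which I would handle by a simple counting argument. Let $a$ denote the number of sampled trees $T_i$ with $\delta(S)_{T_i}=1$ and $b$ the number with $\delta(S)_{T_i}\ge 2$; because each $T_i$ is spanning, $a+b=k/2$ and $\delta(S)_{T^*}\ge a+2b$. The key observation is that whenever $\delta(S)_{T_i}=\{e\}$, the fundamental cut $C_{T_i}(e)$ equals $\delta(S)$ itself: removing $e$ from $T_i$ splits it into components spanning exactly $S$ and $V(G^0)\setminus S$, so the edges of $G^0$ across this partition are precisely $\delta(S)$. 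Hence $C_{T_i}(e)_{T^*}=\delta(S)_{T^*}<k-\alpha\sqrt{k/2-1}$ and $(u_0,v_0)\notin C_{T_i}(e)=\delta(S)$, so the algorithm adds a copy of $e$ to $F$ at \cref{line:augment}. This yields $\delta(S)_F\ge a$, whence
\[
\delta(S)_{T^*\uplus F}\;\ge\;(a+2b)+a\;=\;2(a+b)\;=\;k.
\]

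No real obstacle arises: the argument is purely combinatorial, independent of $\alpha$, and shows connectivity holds with probability $1$ regardless of the random draws $T_1,\dots,T_{k/2}$. The only step worth stating carefully is the fundamental-cut identity $C_{T_i}(e)=\delta(S)$ when $T_i$ has a single edge in $\delta(S)$, as everything else follows by direct bookkeeping.
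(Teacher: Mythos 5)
Your proof is correct and follows essentially the same approach as the paper's: case-split on whether $\delta(S)_{T^*}\ge k-\alpha\sqrt{k/2-1}$, use the $B$ copies in the easy case, and in the hard case argue that each tree contributes at least two edges once $F$ is added. You make explicit the fundamental-cut identity $C_{T_i}(e)=\delta(S)$ when $\delta(S)\cap T_i=\{e\}$, a step the paper uses only implicitly, so if anything your write-up is slightly more complete.
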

\begin{proof}
    Fix spanning trees $T_1,\cdots, T_{k/2}$ in $G^0$ and a cut $S$
    where $(u_0,v_0)\not\in \delta(S)$.  We show that $\delta(S)_{T^*\uplus F\uplus B}\geq k$. 
    If $\delta(S)_{T^*}\geq k-\alpha\sqrt{k/2-1}$, then since $B$ has $\alpha\sqrt{k/2-1}$ copies of the minimum spanning tree, $\delta(S)_{T^*\uplus B}\geq k$ and we are done.
    Otherwise $\delta(S)_{T^*}<k-\alpha\sqrt{k/2-1}$. Then, we know that for any tree $T_i$, 
    if $\delta(S)_{T_i}=1$, since $(u_0, v_0)\not\in \delta(S)_{T_i}$,   $F$  has one extra copy of the unique edge of $T_i$ in $\delta(S)$. 
    Therefore, including those cases where an extra copy of the edge $e$ is added, each $T_i$ has at least two edges in $\delta(S)$, so $\delta(S)_{T^*\uplus F}\geq 2 \cdot \frac{k}{2} \ge k$ as desired since there are $\frac{k}{2}$ spanning trees $T_i$.
\end{proof}

To bound the expected cost of our rounded solution, we use the concentration property of $\lambda$-uniform trees on edges of $T^*$ to show the probability that any fixed cut $\delta(s)$ is in type (ii), i.e. $\delta(S) < k - \alpha \sqrt{k/2-1}$, is exponentially small in $\alpha$, i.e. $\leq e^{-\alpha^2/2}$, even if we condition on $\delta(S)_{T_i} = 1$ for a single tree $T_i$.

In our algorithm we sample $k/2$ trees $T_1,\dots,T_{k/2}$. The following definition will be useful in this section as well as in \cref{sec:alg-analysis}. Note it is important to separate the case in which $(u_0,v_0) \in \delta(S)$ for a cut $S$ because in this event, $x^0(\delta(S))$ may be as small as $k/2$, in which case our analysis is not valid. However, since the $(u_0,v_0)$ edge has cost 0, we need not worry about such cuts since they can be trivially satisfied by adding many copies of this edge.
\begin{definition}[$\mathcal{E}^i_e$]\label{def:calE}
For a tree $T_i$ sampled in \cref{alg:warmup} and an edge $e$, we define $\mathcal{E}^i_e$ to be the event that $e\in T_i\wedge (u_0,v_0)\notin \delta(S_{T_i}(e))$.
\end{definition}

\begin{lemma} \label{lem:prob_augment}
For any $0 \leq \alpha \leq \sqrt{k}$, $1\leq i\leq k/2$, and any $e\in E$, 
	\begin{align*}
		\P{\delta(S_{T_i}(e))_{T^*} \leq k - \alpha \sqrt{k/2-1}  \mid  \mathcal{E}^i_e} \leq e^{-\alpha^2/2}.
	\end{align*}
	where the randomness is over spanning trees $T_1,\cdots, T_{i-1}, T_{i+1}, \cdots, T_{k/2}$ independently sampled from $\mu_\lambda$.
\end{lemma}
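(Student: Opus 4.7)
The plan is to fix $T_i$ (making the cut $C := C_{T_i}(e)$ deterministic), decompose $C_{T_i}(e)_{T^*}$ into a deterministic floor plus a Bernoulli-sum fluctuation, and invoke the Chernoff bound of \cref{thm:chernoff}. The conditioning forces $u_0, v_0$ to lie on the same side of $C$, so $C$ corresponds to a genuine cut of $G$ and $x^0(C) \ge k$ by the LP constraint of \eqref{eq:k_ecsmlp}.

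Starting from
\[
C_{T_i}(e)_{T^*} \,=\, |T_i \cap C| + \sum_{j \neq i}|T_j \cap C| \,=\, 1 + \sum_{j \neq i}|T_j \cap C|,
\]
where the last equality uses that $e$ is the unique edge of $T_i$ in its fundamental cut $C$, we observe that for each $j \neq i$, $T_j$ is independent of $T_i$ and drawn from $\mu_\lambda$, so by \cref{lem:sum_bernoulli}, $|T_j \cap C| \sim BS(q)$ with $q = \sum_{f \in C}\PP{\mu_\lambda}{f \in T_j}$. Combining the upper-bound on marginals from \cref{lem:max_entropy} with the total-marginal identity $\sum_{f \in E^0}\PP{\mu_\lambda}{f \in T_j} = |V(G^0)|-1$ turns the one-sided guarantee into an approximate two-sided one, yielding $q \ge \tfrac{2}{k}x^0(C) - O(|V|/2^n) \ge 2 - o(1)$.

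The crucial step is to exploit that $|T_j \cap C| \ge 1$ deterministically since $T_j$ is spanning: any Bernoulli-sum that is almost surely at least $1$ must include a Bernoulli equal to $1$ almost surely, for otherwise the product of the failure probabilities would be strictly positive and the sum would vanish with positive probability. Peeling off that deterministic Bernoulli, $|T_j \cap C| - 1$ is itself Bernoulli-sum with mean $q-1 \ge 1 - o(1)$, so by \cref{fact:sumBS}, $Z := \sum_{j\neq i}(|T_j\cap C| - 1)$ is $BS(\tilde Q)$ with $\tilde Q \ge (k/2-1)(1-o(1))$. Substituting back,
\[
C_{T_i}(e)_{T^*} \,=\, 1 + (k/2-1) + Z \,=\, k/2 + Z,
\]
so the target event becomes $\{Z \le k/2 - \alpha\sqrt{k/2-1}\}$. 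Setting $m := k/2-1$ and applying \cref{thm:chernoff} to $Z$ with $q' = m$ and $\epsilon = \alpha/\sqrt{m}$ (so that $(1-\epsilon)q' = m - \alpha\sqrt{m}$) delivers the bound $e^{-\alpha^2/2}$.

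The main obstacle is the peel-off step: without it, $\sum_{j\neq i}|T_j \cap C|$ is only $BS(Q)$ with $Q \approx k-2$, and a direct Chernoff on this would yield the weaker bound $e^{-\alpha^2/4}$; it is precisely the reduction to a Bernoulli-sum with mean $m = k/2-1$ that matches the $\alpha\sqrt{k/2-1}$ deviation in the target threshold and recovers the $e^{-\alpha^2/2}$ rate. Secondary technical points are pinning down the lower bound $q \ge 2 - o(1)$ despite \cref{lem:max_entropy} only providing an upper bound on individual marginals, and verifying that the $o(1)$ slack in $\tilde Q$ and the $+1$ offset from $k/2$ versus $m$ are absorbed without corrupting the Chernoff exponent.
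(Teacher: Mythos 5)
Your proposal is correct and follows essentially the same route as the paper: condition on $T_i$, use the LP cut constraint together with marginal (near-)preservation to lower-bound each $\E{|T_j\cap C|}$ by roughly $2$, peel off the deterministic $\ge 1$ contribution per tree to obtain a Bernoulli-sum with mean roughly $k/2-1$, and apply the Chernoff bound of \cref{thm:chernoff}. Your explicit justification of the peel-off step and your handling of the $o(1)$ slack from \cref{lem:max_entropy} are slightly more careful than the paper's (which silently ignores the $1+2^{-n}$ factor), but the argument is the same.
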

\begin{proof}
Condition on tree $T_i$ and the event $\mathcal{E}^i_e$. By  \cref{lem:sum_bernoulli}, for any $1 \le j\leq k/2$ such that $j\neq i$, $\delta(S_{T_i}(e))_{T_j}$ is a $BS( \E{\delta(S_{T_i}(e))_{T_j}})$ random variable, with  $\E{\delta(S_{T_i}(e))_{T_j}} =\frac{2}{k}x(\delta(S_{T_i}(e))) \ge 2$.
    Also, by definition, $\delta(S_{T_i}(e))_{T_i} = 1$ (with probability 1).
    Since $T_1,\cdots, T_{k/2}$ are independently chosen, by \cref{fact:sumBS} 
    the random variable $\delta(S_{T_i}(e))_{T^*}$ is distributed as $BS(q)$ for $q \ge k-1$. 
    Since each $T_j$ has at least one edge in $\delta(S_{T_i}(e))$, $\delta(S_{T_i}(e))_{T^*}\geq k/2$ with probability 1. So,
     by \cref{thm:chernoff}, with $q' = k-1-k/2$, when $0 \leq \alpha \leq \sqrt{k/2-1}$,
    \begin{align*}
        &\P{\delta(S_{T_i}(e))_{T^*} < k - \alpha \sqrt{k/2-1} \mid  \mathcal{E}^i_e }
        \\
        &= \P{\delta(S_{T_i}(e))_{T^*}-k/2 < k/2 - \alpha \sqrt{k/2-1} \mid  \mathcal{E}^i_e} \\
	 	&\leq  e^{-\frac{(\alpha/\sqrt{k})^2 (k/2-1)}{2}} = e^{-\alpha^2 / 2}.
    \end{align*}  
    Averaging over all realizations of $T_i$ satisfying the required conditions proves the lemma.
\end{proof}

\begin{proof}[Proof of \cref{thm:main-warmup}]
Let $x$ be an optimum solution of LP \eqref{eq:k_ecsmlp}.
Since the output of the algorithm is always $k$-edge connected we just need to show $\E{c(F\cup T^*\cup B)}\leq \left(1+\sqrt{\frac{8\ln k}{k}}\right)c(x)$.
    By linearity of expectation, 
    \begin{align*}
        \E{c(T^*)} &= \sum_{i\in [\frac{k}{2}]} \E{c(T_i)} =\frac{k}{2} \sum_{e\in E} c(e) \PP{\mu_\lambda}{e} = \frac{k}{2}\sum_{e\in E} c(e) \cdot  \frac{2}{k}\cdot x_e =c(x),
    \end{align*}
    where for simplicity we ignored the $1+2^{-n}$ loss in the marginals. 
On the other hand, since by \cref{fact:opt_in_sp_polytope}, $\frac{2x}{k}$ is in the spanning tree polytope of $G^0$,  $c(B) \leq \frac{2c(x)}{k} \cdot\alpha\sqrt{k/2-1}\leq \frac{\alpha c(x)
}{\sqrt{k/2}}$.
It remains to bound the expected cost of $F$. By \cref{lem:prob_augment},
\begin{align*}
        \E{c(F)} &= \sum_{e\in E} c(e)\sum_{i=1}^{k/2} \P{\mathcal{E}^i_e} \P{\delta(S_{T_i}(e))_{T^*}<k-\alpha\sqrt{k/2-1}  \mid \mathcal{E}^i_e}  \\
        &\leq \sum_{e\in E} c(e) x_e e^{-\alpha^2/2} 
        \leq e^{-\alpha^2/2} c(x).
    \end{align*}
  Putting these together we get,
		$\E{c(T^* \cup B \cup F)} \leq (1 +\alpha/\sqrt{k/2} + e^{-\alpha^2/2})  c(x).$
Setting $\alpha = \sqrt{\ln\left( \frac{k}{2}\right)}$ finishes the proof.
\end{proof}

\section{Improved Algorithm and Proof of Main Theorem}\label{sec:alg-analysis}

We now introduce our main algorithm  that has an approximation ratio of $1 + O(\frac{1}{\sqrt{k}}))$.
Let $x^0$ be an optimal solution of LP \eqref{eq:k_ecsmlp} extended to $G^0$ as above. Our algorithm is given in \cref{alg:main}. Note for convenience we drop the ceiling in the expression $\frac{k}{2} + \alpha\sqrt{k}$ in all that follows.

\begin{algorithm}[htb]\caption{Algorithm for $k$-ECSM with Approximation Ratio $1 + O(\frac{1}{\sqrt{k}}))$}\label{alg:main}
\begin{algorithmic}[1]
    \State Let $x^0$ be an optimum solution of \eqref{eq:k_ecsmlp} extended to the graph $G^0$ as described above. \label{line:opt_ecss}
    \State Find  weights $\lambda : E^0 \to \R_{\ge 0}$ such that for any $e\in E^0$, $ \PP{\mu_{\lambda}}{e} \le  \frac{2}{k}\cdot x_e^0 \cdot (1 + 2^{-n})$.
    %
    \State Initialize $F \leftarrow \emptyset$.
    \State Sample $k/2 + \alpha \sqrt{k}$ spanning trees $T_1,\cdots,T_{k/2+\alpha \sqrt{k}}\sim\mu_\lambda$ (in $G^0$) independently and let $T^* \leftarrow T_1 \uplus \dots \uplus T_{k/2 + \alpha \sqrt{k}}$. 
    \label{line:sample_st}
    \State Let $\mathcal{S} \leftarrow \{S_{T_i}(e) : i\in [ \frac{k}{2}+\alpha \sqrt{k}],e\in T_i, (u_0,v_0) \not\in \delta(S)\}$. \Comment{$\mathcal{S}$ is the set of one-cuts (see \cref{def:one-cut}) of $T_i \in T^*$.} \label{line:onecut}
    \For{$S\in \mathcal{S}$}
        \State  $P(S):=\uplus_{i=1}^{k/2+\alpha \sqrt{k}} \{e\in T_i: S_{T_i}(e)=S\}$  \Comment{$P(S)$ is the multi-set of $e \in T^*$ with one-cut $S$.}
        \If{$\delta(S)_{T^*} < k$} 
            \For {$j=1$ to $k- \delta(S)_{T^*}$}
                \State Sample an edge from $P(S)$ uniformly at random and add into $F$. \label{line:augment}
            \EndFor
        \EndIf
    \EndFor
    \State {\bf Return} $T^* \uplus F$.
\end{algorithmic}
\end{algorithm}

\mainthm*

We remark that we may assume $k \ge 100$ without loss of generality because for smaller values of $k$ our guarantee is worse than Christofides' algorithm.

\begin{lemma}[$k$-Edge Connectivity of the Output]\label{lem:connectivity}
    The output of \cref{alg:main}, $F\uplus T^*$ is a $k$-edge connected subgraph of $G$. 
\end{lemma}
\begin{proof}
First, note that for every set $S \subset V$ in $G$, the corresponding cut in $G_0$ has $u_0,v_0$ on the same side. Therefore, we may restrict our attention to sets $S \subset V$ such that $u_0,v_0 \not\in S$. However for such an $S$, line 9 of the above algorithm ensures $\delta(S)_{T^* \uplus F} \ge k$, which completes the claim.
\end{proof}

\begin{lemma}[Variance Upper Bound of Cuts in a Random Spanning Tree]\label{lem:variance-bound}
Let $\mu_{\lambda}$ be the max-entropy distribution in \cref{alg:main}. 
For any $0\leq p \leq 1$, any $\epsilon \geq 0$ and any $S\subseteq V$ such that $\PP{T\sim \mu_{\lambda}}{\delta(S)_T = 1} = p$ and $\EE{T\sim \mu_{\lambda}}{\delta(S)_T} = 2 + \epsilon$, we have $\VV{T\sim \mu_{\lambda}}{\delta(S)_T} \leq 4p + 3\epsilon$.
\end{lemma}

\begin{proof}
    By \cref{thm:max-entropy}, $\delta(S)_T$ is distributed as a BS random variable with $\EE{T\sim \mu_{\lambda}}{\delta(S)_T} = 2 + \epsilon$ and $\PP{T\sim \mu_{\lambda}}{\delta(S)_T \geq 1} = 1$. Hence we can write $\delta(S)_T = 1 + X_1 + \cdots + X_m$ for some integer $m \geq 2$ \footnote{We remark that the case for $m = 1$ is trivial.}, where $X_1,\cdots,X_m$ are independent Bernoulli random variables with success probabilities $b_1 \geq b_2 \geq \dots \geq b_m$. Then from the assumption, $\sum_{i=1}^m b_i = 1 + \epsilon$. By \cref{fact:sumBS}, we have 
    \begin{align*}
        \VV{T\sim \mu_{\lambda}}{\delta(S)_T} = \Var{\sum_{i=1}^m X_i} = \sum_{i=1}^m b_i(1-b_i).
    \end{align*}
    
    If $4p \geq 1-2\epsilon$, then we have
    \begin{align*}
        \VV{T\sim \mu_{\lambda}}{\delta(S)_T}  =  \sum_{i=1}^m b_i(1-b_i)  \leq  \sum_{i=1}^m b_i = 1 + \epsilon \leq 4p + 3\epsilon.
    \end{align*}

    Otherwise, $4p < 1 - 2\epsilon$. Notice that
    \begin{align*}
        p &= \Pr[\forall i, X_i = 0] = (1 - b_1) \prod_{i=2}^m (1 - b_i) \geq (1 - b_1) (1 - \sum_{i=2}^m b_i)  = (1 - b_1) \cdot (b_1 - \epsilon).
    \end{align*}
    where the fourth step comes from Weierstrass Inequality, and the last step comes from $\sum_{i=1}^m b_i = 1 + \epsilon$.
    This gives $b_1 \leq \frac{1}{2}(1 + \epsilon - \sqrt{(1 - \epsilon)^2 - 4p})$ or $b_1 \geq \frac{1}{2}(1 + \epsilon + \sqrt{(1 - \epsilon)^2 - 4p})$. Since $4p < 1 - 2\epsilon \leq (1 - \epsilon)^2$, the solutions for $b_1$ are well-defined. By \cref{fact:small-bern} we have $b_1 \geq \frac{1}{2}(1 + \epsilon)$, so $b_1 \geq \frac{1}{2}(1 + \epsilon +\sqrt{(1 - \epsilon)^2 - 4p}) \geq 1 - 2p - \frac{\epsilon}{2}$ (using the square root inequality $\sqrt{1-x} \ge 1-x$ for $0 \le x \le 1$).
    
    Therefore, $\VV{T\sim \mu_{\lambda}}{\delta(S)_T}$ is upper-bounded by:
    \begin{align*}
        \VV{T\sim \mu_{\lambda}}{\delta(S)_T} &= \sum_{i=1}^m b_i (1 - b_i) \leq b_1 (1 - b_1) + \sum_{i=2}^m b_i \\ &=  b_1 (1 - b_1) + (1 + \epsilon - b_1) \\
        &= 1 + \epsilon - b_1^2 
        \leq 1 + \epsilon - (1 - 2p - \frac{\epsilon}{2})^2 
        \leq 4p + 3\epsilon .
    \end{align*}

\end{proof}


As mentioned in \cref{sec:overview}, the following lemma is the key to analyzing \cref{alg:main}. Roughly speaking, it says that the probability a cut is ``bad," i.e. has fewer than $k-\alpha \sqrt{k}$ edges in $T^*$, is exponentially small in the probability that $\delta(S)_{T} = 1$ for $T \sim \mu_\lambda$.

\begin{lemma}[Expected Augmentation of a Cut]\label{lem:aug-large-cut}
    For any $k \ge 100$ and integer $\alpha \ge 1$ let $\mu_{\lambda}$ be the max-entropy distribution and $T^*$ be the union of $\frac{k}{2}+\alpha \sqrt{k}$ random spanning trees sampled from $\mu_{\lambda}$ in \cref{alg:main}. 
    Then for 
    any $S\subseteq V$, 
    \begin{align*}
        \EE{T^*}{ \left( k - \delta(S)_{T^*} \right)^+} \leq 1.8\sqrt{k}\exp\left(\frac{-0.6\alpha}{\max\{k^{-1/2},\P{\delta(S)_T=1}\}}\right)
    \end{align*}
\end{lemma}

\begin{proof}
    We can write the expectation as 
    \begin{align}\label{eq:aug-large-cut-sum}
         \EE{T^*}{ ( k -  \delta(S)_{T^*})^+}
        \le  & \sum_{i= 1}^{k} \PP{T^*}{\delta(S)_{T^*} \leq k - i}   \nonumber \\
        \leq  & \sum_{i= 0}^{\sqrt{k}} \sum_{j=1}^{\sqrt{k}}  \PP{T^*}{\delta(S)_{T^*} \leq k - (i\cdot \sqrt{k} + j)}  \nonumber  \\
        \leq  & \sum_{i= 2\alpha}^{\sqrt{k}} \sqrt{k} \cdot \PP{T^*}{\delta(S)_{T^*} \leq k - (i-2\alpha ) \sqrt{k}}, 
    \end{align}
where we reindex for convenience in the following argument. Define $\beta\geq 0$  such that 
$x(\delta(S))=k+\beta\sqrt{k}$, or equivalently that $\E{\delta(S)_T} = 2(1+\beta/\sqrt{k})$.
By  \cref{lem:variance-bound}, we have
$$\VV{T^*}{\delta(S)_{T^*}} \le (4p+6\beta/\sqrt{k})(k/2+\alpha \sqrt{k})
$$
where $p=\P{\delta(S)_T=1}$. Also, notice,
$$\E{\delta(S)_{T^*}} \ge (k/2+\alpha \sqrt{k})\E{\delta(S)_T}=k+(2\alpha+\beta)\sqrt{k}+2\alpha\beta.$$ 
Therefore, by Bernstein's inequality, for $i \ge 2\alpha$, we have that $\PP{T^*}{\delta(S)_{T^*} \leq k - (i-2\alpha)\sqrt{k}}$ is equal to $$\PP{T^*}{\delta(S)_{T^*} \leq \E{\delta(S)_{T^*}} - (i+\beta)\sqrt{k}-2\alpha \beta}$$
Applying \cref{thm:bernstein}, this is at most 
    \begin{align*} \exp \left( -\frac{(i+\beta)^2k+ 4\alpha\beta(i+\beta)\sqrt{k} }{2(2kp+3\beta\sqrt{k} + 6\alpha\beta +4\alpha p\sqrt{k} +  (i+\beta)\sqrt{k} /3+ 2\alpha \beta/3)} \right)
    \end{align*}
 Now, note that using that $\beta\geq 0$ and the mediant inequality (namely, that for $A,B,C,D \ge 0$ we have $\frac{A+B}{C+D}\geq \min\{A/C, B/D\}$), we can upper bound the term inside the $\exp$ by  
 $$-\min\left\{\frac{i^2k}{4kp + 8\alpha p\sqrt{k} +2i\sqrt{k}/3}, \frac{  2ik + 4\alpha i\sqrt{k}}{40\alpha/3 + 20\sqrt{k}/3} \right\}$$ 
 Therefore, we can bound this probability by
 \begin{align*}
&\leq \exp\left(-\frac{1}{\max\{k^{-1/2},p\}}\min\left\{\frac{i^2}{4+8\alpha/\sqrt{k}+2i/3}, \frac{2i + 4\alpha i/\sqrt{k}}{  20/3+ 40\alpha/3\sqrt{k}}\right\}\right)\\
&\underset{i\geq 2\alpha,\alpha\geq 1, k\geq 100}{\leq} \exp\left(\frac{-0.3i}{\max\{k^{-1/2},p\}}\right) 
    \end{align*}
Therefore,
\begin{align*}
   \EE{T^*}{ ( k - \delta(S)_{T^*})^+} &\leq \sqrt{k} \sum_{i=2\alpha}^{\sqrt{k}} \exp\left(\frac{-0.3i}{\max\{k^{-1/2},p\}}\right) \\
   &\underset{p,k^{-1/2}\leq 1/e}{\leq} \sqrt{k} \exp\left(\frac{-0.6\alpha}{\max\{k^{-1/2},p\}}\right) \sum_{i=0}^\infty e^{-0.3ei} \\
   &\leq 1.8\sqrt{k}\exp\left(\frac{-0.6\alpha}{\max\{k^{-1/2},p\}}\right)
\end{align*}

\end{proof}

Given the above lemma, the expected cost of $F$ follows from a relatively straightforward calculation: 

\begin{lemma}[Expected Payment of an Edge for Augmentation]\label{lem:aug-main}
    For any $k \ge 100$ and integer $\alpha \ge 1$, let $T^*$ be the union of $\frac{k}{2}+\alpha \sqrt{k}$ random spanning trees $T_1, \cdots, T_{k/2+\alpha \sqrt{k}}$ in \cref{alg:main}. For any solution $x$ to LP \eqref{eq:k_ecsmlp}, 
    \begin{align*}
        \EE{T^*}{c(F)} \leq \left(1+\frac{2\alpha}{\sqrt{k}}\right)\left(\frac{7.2e}{\sqrt{k}} e^{-0.6\alpha e} + e^{-\sqrt{k}/2}\right)c(x)
    \end{align*}
    where $F$ is as defined in \cref{alg:main}. 
\end{lemma}

\begin{proof}
    Fix any $i\in [\frac{k}{2} + \alpha \sqrt{k}]$, 
     condition on $T_i$, fix an edge $e\in T_i$ such that $u_0,v_0\notin S_{T_i}(e)$. (If $v_0 \in S_{T_i}(e)$, then this is not a cut in the original graph $G$, so there is nothing to prove). Let $S = S_{T_i}(e)$ and let $p = \PP{T\sim \mu_{\lambda}}{\delta(S)_T = 1}$.
    Recall $P(S):=\uplus_{j=1}^{k/2+\alpha \sqrt{k}}\{f\in T_j: S_{T_j}(f)=S\}$ denotes the multi-set of edges $f\in T_j$ for all $1\leq j\leq k/2+\alpha \sqrt{k}$, such that $S_{T_j}(f) = S$.  
    
    Let $X_{T_i,e}$ be the number of times that edge $e$ from tree $T_i$ is sampled in \cref{line:sample_st}, line 10.
    We will prove that, letting $\mathcal{E}^i_e$ denote the event $e \in T_i, (u_0,v_0)\notin \delta(S_{T_i}(e))$,
    \begin{equation}\label{eq:Ece}
        \E{X_{T_i,e} \mid \mathcal{E}_e} 
        \leq \frac{7.2e}{\sqrt{k}} e^{-0.6\alpha e} + e^{-\sqrt{k}/2}
    \end{equation}
    Then, to prove the lemma,
    \begin{align*}\label{eq:main-cF}
        \E{c(F)} = ~ & \sum_{e\in E} c(e)\sum_{i=1}^{k/2+\alpha \sqrt{k}} \P{\mathcal{E}^i_e} \cdot \E{X_{T_i,e} \mid \mathcal{E}^i_e} \nonumber \\
        &\underset{\eqref{eq:Ece}}{\leq} \sum_{e\in E} c(e)\sum_{i=1}^{k/2+\alpha \sqrt{k}} \P{\mathcal{E}^i_e} \left(\frac{7.2e}{\sqrt{k}} e^{-0.6\alpha e} + e^{-\sqrt{k}/2}\right) \\
        &\le \sum_{e\in E} c(e)\left(\frac{k}{2}+\alpha \sqrt{k}\right) \cdot \frac{2}{k} x_e \cdot  \left(\frac{7.2e}{\sqrt{k}} e^{-0.6\alpha e} + e^{-\sqrt{k}/2}\right)\\  &= \left(1+\frac{2\alpha}{\sqrt{k}}\right)\left(\frac{7.2e}{\sqrt{k}} e^{-0.6\alpha e} + e^{-\sqrt{k}/2}\right)c(x)
    \end{align*}
    
    In the rest of the proof we show \eqref{eq:Ece}. 
    First, observe that
    \begin{align*}
        \E{X_{T_i,e} \mid T_i, e\in T_i, \delta(S)_{T^*}, |P(S)|} =  & \frac{( k- \delta(S)_{T^*})^+ }{ |P(S)|}.
    \end{align*}
    So, it is enough to upper bound the expected value of the RHS conditioned on $T_i, e\in T_i$.
    Let $T^*_{-i} = T^*\smallsetminus T_i$ and $P(S)_{-i}:=P(S)\smallsetminus T_i$ and note that $|P(S)|=|P(S)_{-i}|+1$ and $\delta(S)_{T^*}=\delta(S)_{T^*_{-i}}+1$. Define $Y_{e,T_i} = \frac{( k - \delta(S)_{T^*_{-i}}+1)^+ }{ |P(S)_{-i}|+1}$.
    Note that $\E{X_{T_i,e} | T_i, e\in T_i} = \E{Y_{e,T_i}}$.
    We drop the subscript of $Y$ for readability.
    
    First, assume $p \ge 2/k$. Then, we write,
    \begin{equation}\label{eq:EYopenup}
    	    \begin{aligned}
        \E{Y} &= \P{|P(S)_{-i}| \geq \frac{pk}{4}} \cdot \E{Y \mid  |P(S)_{-i}| \geq \frac{pk}{4}} \\
         &+ \P{|P(S)_{-i}| < \frac{pk}{4}} \cdot \E{Y \mid |P(S)_{-i}| < \frac{pk}{4}}
    \end{aligned}
    \end{equation}

We upper bound each term in the RHS separately. 
\begin{align*}
    &\P{|P(S)_{-i}| \geq \frac{pk}{4}} \E{Y \mid |P(S)_{-i}| \geq \frac{pk}{4}}\\  &\leq \frac{4}{pk}\P{|P(S)_{-i}| \geq \frac{pk}{4}} \E{(k-\delta(S)_{T^*_{-i}})^+ \mid |P(S)_{-i}| \geq \frac{pk}{4}} \\
    &\leq  \frac{4}{pk}\E{(k-\delta(S)_{T^*_{-i}})^+} \leq \frac{7.2}{p\sqrt{k}}\exp\left(\frac{-0.6\alpha }{\max\{p,k^{-1/2}\}}\right)
\end{align*}
Now, if $p>k^{-1/2}$, the RHS is maximized when $p=1/e$ since $\frac{1}{p} e^{-c/p}$ is an increasing function of $p$ for $0 \le p \le 1/e$; note $p \le 1/e$ (by \cref{cor:1/e}) and $\alpha \ge 1$. We obtain a bound of $\frac{7.2e}{\sqrt{k}} e^{-0.6\alpha e}$.

Otherwise, the maximum is achieved by $k^{-1/2}$. Using $p \ge 2/k$, the above expression is at most $3.6\sqrt{k}e^{-0.6\alpha k^{1/2}}$. 
So, the first term is at most $\frac{7.2e}{\sqrt{k}} e^{-0.6\alpha e}$ for $k\geq 100$.

Next, we bound the second term of \eqref{eq:EYopenup}.
First notice $Y\leq 1$ with probability 1; this is because if there are exactly $\ell$ trees which have $(S,\overline{S})$ as a one-cut then, $\delta(S)_{T^*}\geq k+2\alpha \sqrt{k}-\ell$ whereas $|P(S)|=\ell$. Furthermore, $Y\neq 0$ only when $|P(S)_{-i}|\geq 2\alpha \sqrt{k}\geq 2\sqrt{k}$ (for $\alpha\geq 1$). Therefore,
\begin{align*}
    &\P{|P(S)_{-i}| < \frac{pk}{4}} \cdot \E{Y \mid |P(S)_{-i}| < \frac{pk}{4}} \\ &\leq \P{|P(S)_{-i}| < \frac{pk}{4}} \P{|P(S)_{-i}|\geq2\sqrt{k} \mid |P(S)_{-i}| < \frac{pk}{4}}\\
     &= \P{2\sqrt{k}\leq |P(S)_{-i}|\leq \frac{pk}{4}} \leq e^{-pk/16} \leq e^{-\sqrt{k}/2}
\end{align*}
To see the last two inequalities  notice we must have $p\geq 8k^{-1/2}$ or this event cannot occur.
 Therefore, since $\E{|P(S)_{-i}|} = p(\frac{k}{2}+\alpha\sqrt{k}-1)$ the inequality follows  by an application of the Chernoff bound (\cref{thm:chernoff}). 
 
Putting these two terms together, if $p > 2/k$, 
$$\E{X_{T_i,e} \mid T_i, e \in T_i} = \E{Y} \le \frac{7.2e}{\sqrt{k}} e^{-0.6\alpha e} + e^{-\sqrt{k}/2}$$ 
 
Otherwise, suppose $p \le 2/k$. Then since $\E{|P(S)|_{-i}} \le 1+2\alpha k^{-1/2}$, by \cref{thm:chernoff} (using $\alpha \ge 1$),
\begin{align*}
	\P{|P(S)_{-i}|>(1+2\alpha\sqrt{k})(1+2\alpha k^{-1/2})} &\leq e^{-\frac{4\alpha^2k(1+2\alpha k^{-1/2})}{2+2\alpha\sqrt{k}}} \underset{k\geq 10,\alpha\geq 1}{\le} e^{-\sqrt{k}}.
\end{align*}

Since $Y \le 1$ as observed above, we obtain 
$$\E{Y} \leq  \P{\delta(S)_{T^*_{-i}} \le k} \le \P{|P(S)|_{-i} \ge 2\alpha\sqrt{k}} \le e^{-\sqrt{k}}$$ which gives \eqref{eq:Ece}. Therefore we can bound $\E{Y}$ by $\frac{7.2e}{\sqrt{k}} e^{-0.6\alpha e} + e^{-\sqrt{k}/2}$ for all values of $p$.
\end{proof}

\begin{proof}[Proof of \cref{thm:main}]
    Let $x$ be the optimum solution of \eqref{eq:k_ecsmlp}. From \cref{lem:connectivity}, the output of \cref{alg:main} is always $k$-edge connected. Thus it suffices to show that $\E{c(T^* \uplus F)}\leq (1+ \frac{5.06}{\sqrt{k}})c(x)$. By linearity of expectation, 
    \begin{align*}
        \E{c(T^*)} &= \sum_{i\in [\frac{k}{2}+\alpha\sqrt{k}]} \E{c(T_i)} =(\frac{k}{2}+\alpha\sqrt{k}) \sum_{e\in E} c(e) \PP{\mu_\lambda}{e} \\ &= (\frac{k}{2}+\alpha\sqrt{k})\sum_{e\in E} c(e) \cdot  \frac{2}{k}\cdot x_e =(1+\frac{2\alpha}{\sqrt{k}})c(x),
    \end{align*}
    where for simplicity we ignored the $1+2^{-n}$ loss in the marginals. 
  
    Therefore, by \cref{lem:aug-main}, $\E{c(T^* \uplus F)}$ is at most
    \begin{align*}
        &\leq c(x) \cdot \left(1 + \frac{2\alpha}{\sqrt{k}} + \left(1+\frac{2\alpha}{\sqrt{k}}\right)\left(\frac{7.2e}{\sqrt{k}} e^{-0.6\alpha e} + e^{-\sqrt{k}/2}\right) \right) \leq c(x) \cdot \left(1+\frac{5.06}{\sqrt{k}}\right). 
    \end{align*}
    as desired, where in the last inequality we use $k \ge 100$ and set $\alpha = 2$.
\end{proof}

\section{Conclusion}
We remark that the approximation factor $1+O(1/\sqrt{k})$ is tight for any algorithm that starts by sampling $O(k)$ spanning trees independently from the max-entropy distribution and then fixes the union by adding edges. For a tight example, consider a complete graph with a unit metric on the edges and let $x$ be uniform across all edges. In such a case, the max-entropy distribution $\mu_\lambda$ will be the uniform distribution over all spanning trees of a complete graph. A simple analysis shows that  every vertex will have degree $k-\sqrt{k}$ in $T^*$ with constant probability. Therefore, to fix $T^*$ we need to add at least $\Omega(n\sqrt{k})$ edges. 

It still remains open if the integrality gap of the LP is indeed $1+O(1/k)$ or if there is an approximation algorithm with approximation factor $1+O(1/k)$. It would also be interesting to find the optimal constant for \cref{alg:main}.

\printbibliography

@misc{KKO21b,
      title={A (Slightly) Improved Bound on the Integrality Gap of the Subtour LP for TSP}, 
      author={Anna Karlin and Nathan Klein and Shayan {Oveis Gharan}},
      year={2021},
      eprint={2105.10043},
      archivePrefix={arXiv},
      primaryClass={cs.DS}
}

@inproceedings{CVZ10,
	Author = {Chandra Chekuri and Jan Vondr{\'a}k and Rico Zenklusen},
	Booktitle = {FOCS},
	Date-Added = {2013-07-25 14:36:47 -0700},
	Date-Modified = {2013-07-25 14:37:01 -0700},
	Pages = {575-584},
	Title = {Dependent Randomized Rounding via Exchange Properties of Combinatorial Structures},
	Year = {2010}}

@inproceedings{CTZ21,
  author    = {Federica Cecchetto and
               Vera Traub and
               Rico Zenklusen},
  editor    = {Samir Khuller and
               Virginia Vassilevska Williams},
  title     = {Bridging the gap between tree and connectivity augmentation: unified
               and stronger approaches},
  booktitle = {STOC},
  pages     = {370--383},
  publisher = {{ACM}},
  year      = {2021},
}

@article{TZ21,
  author    = {Vera Traub and
               Rico Zenklusen},
  title     = {A Better-Than-2 Approximation for Weighted Tree Augmentation},
  booktitle={FOCS},
  year = {2021},
  note={to appear},
  
}

@inproceedings{BGA20,
  author    = {Jaroslaw Byrka and
               Fabrizio Grandoni and
               Afrouz Jabal Ameli},
  editor    = {Konstantin Makarychev and
               Yury Makarychev and
               Madhur Tulsiani and
               Gautam Kamath and
               Julia Chuzhoy},
  title     = {Breaching the 2-approximation barrier for connectivity augmentation:
               a reduction to Steiner tree},
  booktitle = {STOC},
  pages     = {815--825},
  publisher = {{ACM}},
  year      = {2020},
}

@article {FJ82,
	title = {On the relationship between the biconnectivity augmentation and traveling salesman problem},
	journal = {Theoretical Computer Science},
	volume = {19},
	year = {1982},
	pages = {189 - 201},
	author = {Fredrickson,G. N. and J\'aJ\'a, Joseph F.}
}

@inproceedings{FGKS18,
author = {Fiorini, Samuel and Gro{\ss}, Martin and K{\"o}nemann, Jochen and Sanit{\`a}, Laura},
title = {Approximating Weighted Tree Augmentation via Chvátal-Gomory Cuts},
booktitle = {SODA},
year={2018},
pages = {817-831},
}

@article{GG08,
  title={Iterated rounding algorithms for the smallest k-edge connected spanning subgraph},
  author={H. Gabow and S. Gallagher},
  journal={SIAM J. Comput.},
  year={2008},
  volume={41},
  pages={61-103}
}

@article{KR96,
author={S. Khuller and B. Raghavachari}, 
title={Improved approximation algorithms for uniform connectivity problems}, 
journal={J Algorithms}, 
volume={21}, 
year={1996}, 
pages={434--450},
}

@inproceedings{OSS11,
  author    = {Shayan {Oveis Gharan} and
               Amin Saberi and
               Mohit Singh},
  title     = {A Randomized Rounding Approach to the Traveling Salesman Problem},
  booktitle = {FOCS},
  pages     = {550--559},
  publisher = {{IEEE} Computer Society},
  year      = {2011},
}

@inproceedings{LOS12,
  author    = {Bundit Laekhanukit and
               Shayan {Oveis Gharan} and
               Mohit Singh},
  editor    = {Artur Czumaj and
               Kurt Mehlhorn and
               Andrew M. Pitts and
               Roger Wattenhofer},
  title     = {A Rounding by Sampling Approach to the Minimum Size k-Arc Connected
               Subgraph Problem},
  booktitle = {ICALP},
  series    = {Lecture Notes in Computer Science},
  volume    = {7391},
  pages     = {606--616},
  publisher = {Springer},
  year      = {2012},
}

@article{Kar99,
author={D. Karger}, 
title={Random sampling in cut, flow, and network design problems}, 
journal={Math OR},
volume={24}, 
year={1999}, 
pages={383--413},
}

@article{Gab05,
author={H. Gabow}, 
title={An improved analysis for approximating the smallest k-edge connected spanning subgraph of a multi-graph}, 
journal={SIAM J Disc Math}, 
volume={19}, 
year={2005}, 
pages={1--18},
}

@InProceedings{BCCGISW20,
  author =	{Sylvia Boyd and Joseph Cheriyan and Robert Cummings and Logan Grout and Sharat Ibrahimpur and Zolt{\'a}n Szigeti and Lu Wang},
  title =	{{A 4/3-Approximation Algorithm for the Minimum 2-Edge Connected Multisubgraph Problem in the Half-Integral Case}},
  booktitle =	{APPROX/RANDOM},
  pages =	{61:1--61:12},
  year =	{2020},
  volume =	{176},
  editor =	{Jaros{\l}aw Byrka and Raghu Meka},
  publisher =	{Schloss Dagstuhl--Leibniz-Zentrum f{\"u}r Informatik},
}

@article{BFS16,
author={Sylvia Boyd and Yao Fu and Yu Sun},
title={A 5/4-approximation for subcubic 2EC using circulations and obliged edges}, 
journal={Discrete Applied Mathematics}, 
volume={209},
pages={48--58}, 
year={2016},
}

@InProceedings{CR98,
author="Carr, Robert
and Ravi, R.",
editor="Bixby, Robert E.
and Boyd, E. Andrew
and R{\'i}os-Mercado, Roger Z.",
title="A New Bound for the 2-Edge Connected Subgraph Problem",
booktitle="IPCO",
year="1998",
publisher="Springer Berlin Heidelberg",
address="Berlin, Heidelberg",
pages="112--125",
}

@article{SV14,
author={Andr\'as Seb\"o and Jens Vygen},
title={Shorter tours by nicer ears: 7/5-Approximation for the graph-TSP, 3/2 for the path version, and 4/3 for two-edge-connected subgraphs},
journal={Combinatorica}, 
volume={34},
number={5},
pages={597--629}, 
year={2014},
}

@article{JT00,
author={J. Cheriyan and R. Thurimella},
title={Approximating minimum- size k-connected spanning subgraphs via matching}, 
journal={SIAM J Comput}, 
volume={30}, 
year={2000}, 
pages={528--560},
}

@article{GGTW09,
  author    = {Harold N. Gabow and
               Michel X. Goemans and
               {\'{E}}va Tardos and
               David P. Williamson},
  title     = {Approximating the smallest \emph{k}-edge connected spanning subgraph
               by LP-rounding},
  journal   = {Networks},
  volume    = {53},
  number    = {4},
  pages     = {345--357},
  year      = {2009},
}

@InProceedings{Pri11,
author="Pritchard, David",
editor="Jansen, Klaus
and Solis-Oba, Roberto",
title="k-Edge-Connectivity: Approximation and LP Relaxation",
booktitle="Approximation and Online Algorithms",
year="2011",
publisher="Springer Berlin Heidelberg",
address="Berlin, Heidelberg",
pages="225--236",
}

@article{FJ81,
author = {Fredrickson,G. N. and J\'aJ\'a, Joseph F.},
journal={SIAM J. Comput.}, 
volume={10},
year={1981},
number={2}, 
pages={270-283},
title={Approximation Algorithms for Several Graph Augmentation Problems},
}

@inproceedings{KKO21,
      title={A (Slightly) Improved Approximation Algorithm for Metric TSP}, 
      author={Anna R. Karlin and Nathan Klein and Shayan {Oveis Gharan}},
      year={2021},
      booktitle={STOC},
      publisher={ACM},
}

@article{GB93,
  author    = {Michel X. Goemans and
               Dimitris Bertsimas},
  title     = {Survivable networks, linear programming relaxations and the parsimonious
               property},
  journal   = {Math. Program.},
  volume    = {60},
  pages     = {145--166},
  year      = {1993}
}

@inproceedings{Edm70,
  Address = {New York, NY, USA},
  Author = {Edmonds, Jack},
  Booktitle = {Combinatorial Structures and Their Applications},
  Date-Added = {2013-07-10 13:23:47 -0700},
  Date-Modified = {2013-07-10 13:26:49 -0700},
  Pages = {69--87},
  Publisher = {Gordon and Breach},
  Title = {Submodular functions, matroids and certain polyhedra},
  Year = {1970},
  Bdsk-Url-1 = {http://dl.acm.org/citation.cfm?id=885909.885912}}

@article{Pit97,
  title={Probabilistic bounds on the coefficients of polynomials with only real zeros},
  author={ Pitman, Jim },
  journal={Journal of Combinatorial Theory, Series A},
  volume={77},
  number={2},
  pages={279-303},
  year={1997},
}

@article{BBL09,
  title={Negative dependence and the geometry of polynomials},
  author={ Borcea, Julius  and  Brändén, Petter and  Liggett, Thomas M. },
  journal={Journal of the American Mathematical Society},
  volume={22},
  number={2},
  pages={521-567},
  year={2009},
}

@article{AGMOS17,
  title={An O (log n /log log n )-Approximation Algorithm for the Asymmetric Traveling Salesman Problem},
  author={ Asadpour, Arash  and  Goemans, Michel X.  and  Mdry, Aleksander  and  Gharan, Shayan Oveis  and  Saberi, Amin },
  journal={Operations Research},
  volume={65},
  number={4},
  year={2017},
}

@article{Hoe56,
  title={On the distribution of the number of successes in independent trials},
  author={Hoeffding, Wassily},
  journal={Annals of Mathematical statistics},
  volume={27},
  number={3},
  pages={713--721},
  year={1956},
  publisher={Institute of Mathematical Statistics}
}



\end{document}